







\documentclass[twocolumn]{autart}    

\usepackage{graphics}
\usepackage{graphicx}
\usepackage{amsmath}
\usepackage{amssymb}
\usepackage{epsfig,wrapfig}
\usepackage{amsmath}
\usepackage{color}
\usepackage{float}
\usepackage{amsmath}
\usepackage{amsfonts}
\usepackage{amssymb}
\usepackage{calrsfs}
\usepackage{cite}
\usepackage{verbatim}
\usepackage[title]{appendix}
\usepackage{xcolor}

\usepackage{wrapfig}

\newcommand{\Eta}{\Tilde{\eta}}
\newcommand{\R}{\mathbb{R}}
\newcommand{\BM}{\begin{bmatrix}}
\newcommand{\EM}{\end{bmatrix}}

\newcommand{\be}{\begin{equation}\begin{aligned}}
\newcommand{\ee}{\end{aligned}\end{equation}}


\newtheorem{definition}{Definition}
\newtheorem{theorem}{Theorem}

\newtheorem{corollary}{Corollary}

\newtheorem{remark}{Remark}
\newtheorem{assumption}{Assumption}
\newtheorem{example}{Example}

\begin{document}

\begin{frontmatter}

\title{Learning Invariant Subspaces of Koopman Operators--Part 1: A Methodology for Demonstrating a Dictionary's Approximate Subspace Invariance} 

\thanks[footnoteinfo]{ Corresponding Author at: The Biological Control Laboratory at UC Santa Barbara, Santa Barbara, California 93106}

\author[ey]{Charles A. Johnson}\ead{cajohnson@ucsb.edu}$^{,*}$,    
\author[sh]{Shara Balakrishnan}\ead{sbalakrishnan@ucsb.edu}, 
\author[ey]{Enoch Yeung}\ead{eyeung@ucsb.edu}               

\address[ey]{Department of Mechanical Engineering, University of California, Santa Barbara, 93106, United States}     

\address[sh]{Department of Electrical and Computer Engineering, University of California, Santa Barbara, 93106, United States}

\begin{keyword}          
Koopman operator; deep dynamic mode decomposition; identification methods; subspace approximation; invariant subspaces; nonlinear system identification; semigroup and operator theory;  neural networks; modeling and identification.       
\end{keyword}         

\begin{abstract}                         
Koopman operators model nonlinear dynamics as a linear dynamic system acting on a nonlinear function as the state. 
This nonstandard state is often called a Koopman observable and is usually approximated numerically by a superposition of functions drawn from a \textit{dictionary}. In a widely used algorithm, \textit{Extended Dynamic Mode Decomposition}, the dictionary functions are drawn from a fixed class of functions.  Recently, deep learning combined with EDMD has been used to learn novel dictionary functions in an algorithm called deep dynamic mode decomposition (deepDMD). The learned representation both (1) accurately models and (2) scales well with the dimension of the original nonlinear system. In this paper we analyze the learned dictionaries from deepDMD and explore the theoretical basis for their strong performance. We explore State-Inclusive Logistic Lifting (SILL) dictionary functions to approximate Koopman observables.  Error analysis of these dictionary functions show they satisfy a property of subspace approximation, which we define as uniform finite approximate closure. Our results provide a hypothesis to explain the success of deep neural networks in learning numerical approximations to Koopman operators. Part 2 of this paper, \cite{johnson2022augSILL}, will extend this explanation by demonstrating the subspace invariant of heterogeneous dictionaries and presenting a head-to-head numerical comparison of deepDMD and low-parameter heterogeneous dictionary learning.
\end{abstract}

\end{frontmatter}

\section{Introduction}



Koopman operator theory considers an alternate representation of a dynamic system where the state evolution of a nonlinear system is linear.  In this representation, the concepts of vibrational, growth, and decay modes in linear systems can be directly extended to nonlinear systems \cite{mezic2005spectral}. These modes address problems in the field of fluid mechanics \cite{rowley2009spectral, schmid2010dynamic, mezic2013analysis} and disease modeling \cite{proctor2018generalizing}, attack identification in the power grid \cite{nandanoori2020model}, as well as programming the steady state of biological systems \cite{hasnaindata} or extracting new biosensors \cite{hasnain2022learning}.  The spectral properties of the linear Koopman operator in this function space connects to classical methods for model reduction, validation, identification and control \cite{mauroy2020Koopman}. Since Mezić's first paper on the spectral properties of the Koopman operator, computing Koopman modes has become a major research focus \cite{budivsic2012applied, williams2014kernel}. The central algorithm for computing these modes is dynamic mode decomposition (DMD). Extensions of DMD, such as extended dynamic mode decomposition (EDMD) \cite{williams2015data} enable the expression of strong nonlinearities in the model. Others, such as the robust approximation algorithm presented in \cite{sinha2018robust}, enable high fidelity modeling in the presence of both process and measurement noise.

DMD-based methods, such as EDMD are a useful set of tools to model nonlinear dynamics from data.  In EDMD, one chooses a nonlinear function space implicitly defined by the span of a predefined, homogeneous function dictionary and then, individual modes are learned to compute a low-rank approximation to a linear operator that evolves the function space forward in time. Because a human typically specifies the dictionary functions in EDMD, the resulting models tend to be high dimensional. One approach to learning lower dimensional models is through the SINDy algorithm. This algorithm uses sparse regression to project to a lower dimensional subspace of the nonlinear function space initially chosen \cite{brunton2016discovering}. Another, Symmetric Subspace Decomposition, approximates the maximal Koopman-invariant subspace in an iterative fashion \cite{haseli2021learning}. 

A second approach we have explored is to use deep learning to learn a function dictionary or a set of observables during training \cite{yeung2019learning}. The deepDMD algorithm uses a form of stochastic gradient descent (SGD) to train a deep artificial neural network to select observables.  Because the neural network is a universal function approximator \cite{hornik1991approximation}, it can learn a large class of continuous observables on a compact set and parameterize the approximate Koopman operator. The deepDMD algorithm has been generally successful at learning both small and large scale nonlinear systems \cite{yeung2019learning}.

Other approaches leverage deep learning to build efficiently parameterized Koopman models with high fidelity.  Methods using linear recurrent autoencoders build effective, low-dimensional Koopman models that are not measurement-inclusive \cite{wehmeyer2018time, lusch2018deep, otto2019linearly}. To do this, they pass observables from the learned function space, via the decoder, back to the original model's state-space. Other work has, under the assumption of ergodicity, delegated the process of choosing a dictionary for EDMD to deep learning \cite{takeishi2017learning}. 


Neural networks are typically black box models. We use dictionaries of functions inspired from deep learning to build Koopman models which both capture the system dynamics and approximate the dynamics of the nonlinear dictionary functions. The success of these models provides insight into why deep learning models have the capacity to approximate Koopman operators well. We explain this success by demonstrating that these dictionaries satisfy uniform finite approximate closure, a property tied to the subspace invariance of these dictionaries.

In Section \ref{sec:KLP}, we introduce the Koopman generator learning problem and define terms relevant to the subspace invariance of a Koopman dictionary. In Section \ref{sec:AppendixNotation}, we introduce notation. In Section \ref{sec:SILL}, we introduce a dictionary inspired by the dictionary functions of deepDMD, the SILL dictionary. In Sections \ref{sec:SILLclosure} and \ref{sec:averageErrorSILL}, we demonstrate the SILL dictionary's approximate subspace invariance. We draw our conclusions in Section \ref{sec:conclusion}.

\section{The Koopman Generator Learning Problem}\label{sec:KLP}
In this section we introduce the problem of choosing a function space and approximate Koopman generator to model a dynamic system.  This problem is not convex, but, when approximated well, it gives an accurate, data-driven, linear model of a nonlinear system. 

Consider a nonlinear, time-invariant, autonomous system with dynamics
\begin{equation}\label{eq:nonlinear_system}
\dot{x} = f(x)
\end{equation}
where $x \in M \subset \mathbb{R}^n$, $f:\mathbb{R}^n \rightarrow \mathbb{R}^n$ is analytic.  The manifold, $M$, is the state-space of the dynamical system.  
We introduce the concepts of a Koopman generator and its associated multi-variate Koopman semigroup, following the exposition of \cite{budivsic2012applied}. 


For continuous nonlinear systems, the Koopman semigroup is a semigroup, a set with an associative binary operation, $\mathcal{K}_{t\in \mathbb{R}}$ of linear but infinite dimensional operators, $\mathcal{K}_t$, that acts on a space of functions, $\Psi$, with elements $y: M \rightarrow \mathbb{R}^m$.
We assume that each $y\in\Psi$ is differentiable with a bounded derivative. Our function, $y$, is an observable because it is a function of the state, $x$.  We say $\mathcal{K}_t:\Psi \rightarrow \Psi$ is an operator for each $t\geq 0$.  
The Koopman operator applies the transformation, 
\begin{equation}\label{eq:koopman_def} \mathcal{K}_t \circ y(x_0) = y \circ \Phi_t(x_0),\end{equation}
where $\Phi_t(x)$ is the flow map of the dynamical system (\ref{eq:nonlinear_system}) evolved forward up to time $t$, given the initial state $x_0$. Instead of examining the evolution of the state, the Koopman semigroup allows us to study the forward evolution of functions of state, $y(x)$ \cite{williams2015data}. 

The action of the generator, $\mathcal{K_G}$, on the observables, $y$, for the Koopman semigroup is defined as 
\begin{equation}
\mathcal{K_G} \circ y  \triangleq \lim_{t\rightarrow 0} \frac{ \mathcal{K}_t \circ y - y }{t}.
\end{equation}

When $y$ and $t$ are fixed, we see from Eq. (\ref{eq:koopman_def}) that $\mathcal{K}_t$ is a function of the state, $x$. Similarly, $\mathcal{K_G}$ may be understood as a function of $x$.  The Koopman generator is a state-dependant derivative operator, see Section 7.6 of \cite{lasota1998chaos}. It satisfies
\be\label{eq:dyn_sys_in_data}
\frac{d}{dt}y(x)=\mathcal{K_G}(x)\circ y(x).
\ee In rare cases the Koopman generator will not have a dependence on $x$, see Eq. (24) and the surrounding section in  \cite{brunton2016koopman}.

\subsection{Learning Koopman operators from data}
In discrete-time, data-driven Koopman operator learning, we have $r$  pairs of measurements of observables, \[(y(x_i), y(f(x_i))),\mbox{ for }i=1,2,...,r.\] In continuous-time (CT), data-driven Koopman generator learning, our pairs are measurements of observables and their derivatives \be(y(x_i), d(y(x_i))/dt),\mbox{ for }i=1,2,...,r.\ee

We briefly remark that, there are techniques that a CT data-driven Koopman generator learning problem working with discrete measurements as opposed to measurements paired with their derivatives. For example, see \cite{mauroy2019koopman}.

Because we are working with numerical computation, these measurements (and their derivatives, if applicable) are finite dimensional, call the dimension $m<\infty$. We assume that our measurements are higher dimensional than the underlying dynamic system, $m\geq n$, and that $y(x)$ is injective. We assume all measurements are noise-free.

The Koopman generator (for CT) $\mathcal{K_G}(x)$ is unknown, however, we assume it exists and satisfies \be \frac{d(y(x_i))}{dt} = \mathcal{K_G}(x_i)\circ y(x_i)\mbox{ for }i=1,2,...,r.\ee

While we do not know the values of $x$ from our data we can write down our derivative function as an implicit function of $x$, this function is, $F:\R^m\rightarrow \R^m$. To describe $F$ we define, $y=h(x)$. The function $h$ is invertible as we assume that $y$ is injective.  Now we write $F$ as \be\label{eq:implicit} F(y) = \mathcal{K_G}(h^{-1}(y))\circ y.\ee 

The implicit function, $F$, simply is Eq. (\ref{eq:dyn_sys_in_data}) rewritten in terms of $y(x)$ instead of $x$. This is an important distinction as the true system state $x$ is not necessarily known. We are working from measurements (observables) of the system in Eq. (\ref{eq:nonlinear_system}). We label these measurements $y$.   Given such measurements, it is customary to choose a finite set of $N$ {\em dictionary functions}, $\psi(y):\R^m\rightarrow \R^N$, and a constant matrix, $K\in\R^{N\times N}$ to model $F$. 

Note that we choose $\psi$ without knowing the equation for $y(x)$. The function $y(x)$ is an observable function, as it is a measured function of the state of the system given in Eq. (\ref{eq:nonlinear_system}). Likewise, as $\psi$ is a function of $y(x)$, $\psi$ is a function of $x$ as well. Therefore, in a data-driven setting, dictionary functions of $y(x)$, $\psi(y)$, are also observables of some Koopman generator. Typically, we choose $N$ such that $N>m$, and so $\psi(y)$ is a ``lifting'' of our data (see the use of ``lifting'' in \cite{korda2018convergence}).

In numeric methods for Koopman modeling, we approximate $F$ with the matrix-vector product, $K\psi(y)$.  Effectively, in a data-driven setting, this amounts to projecting the action of an infinite dimensional Koopman operator as matrix multiplication on sampled data space. For example, when $N>m$, we can define a projection function $\mathtt{P}:\R^N\rightarrow \R^m$ that maps the matrix-vector product of the approximate Koopman operator $K$  multiplying the dictionary function $\psi(y)$  to the vector field $F(y)$ from Eq. (\ref{eq:implicit}),  such that
\be \mathtt{P}\circ K \circ \psi \circ y \approxeq  F \circ y .\ee 
This, in summary, yields the following finite approximation to the Koopman generator equation
\be\frac{dy(x)}{dt}=\mathcal{K_G}(x)y(x) \triangleq F(y(x)) \approxeq \mathtt{P}(K\psi(y)).\ee 

The matrix, $K$, is a linear operator, to ensure that it behaves like a true Koopman generator we choose it, in conjunction with $\psi$, to satisfy \be\label{eq:koop_approx_behavior}\frac{d\psi(y)}{dt}\approx K\psi(y).\ee  

We are tasked to learn a numerical, finite-dimensional approximation $K$ of the Koopman generator, $\mathcal{K_G}$ and the set of dictionary functions, $\psi$.  This matrix $K$ is the Koopman generator approximation that acts specifically on data-centered evaluations of dictionary functions $\psi(y)$ rather than the observable function $y(x).$

\subsection{Problem Statement}
Finding the Koopman generator $\mathcal{K_G}$ from data is  difficult, due to the lack of knowledge about the true Koopman generator and the parametric form of the measurements.  So, we instead solve the following optimization problem in its place.
\begin{equation}\label{eq:objective}
\min_{K, \psi \in \Psi} \sum_{i=1}^{r} \left\Vert  \frac{d\psi(y(x_i))}{dt}  - K \psi(y(x_i)) \right\Vert.
\end{equation}

This is an abstraction of the discrete-time problem statement. In practice, derivatives are difficult to numerically compute and measure.

In this optimization we need to select dictionary functions, as well as a real-valued matrix, $K$. This optimization problem is non-convex, since the form of $\psi(y)$ is unknown or parametrically undefined.  The model dimension, $N$, is a hyperparameter of Eq. (\ref{eq:objective}).

In EDMD the dictionary functions $\psi(y)$ are predefined, drawn from a class of nonlinear functions, and assumed to be known.  Under these assumptions, Eq. (\ref{eq:objective}) is a convex optimization problem with a closed-form solution. 

By contrast, in deepDMD, the dictionary functions and $K$ are learned simultaneously during iterative training.  This is, of course, a nonlinear, non-convex optimization problem, for which we employ variants of the stochastic gradient descent algorithm, such as adaptive gradient descent (AdaGrad) \cite{duchi2011adaptive} or adaptive momentum (ADAM) \cite{kingma2014adam}.  


\subsection{Finite Closure}\label{sec:error}

Previous work characterizes the closure and convergence of Koopman models as additional dictionary functions are appended to the model for the DMD and EDMD algorithms \cite{arbabi2017ergodic, korda2018convergence}. We explore closure as an inherent property of a dictionary. We begin by understanding the property of subspace invariance, which corresponds to finite exact closure. Closure, in this context, refers to how the action of the Koopman generator on the dictionary functions does not give any function outside of the dictionary functions' span.  This span is the subspace that we refer to.

Let $S$, be the span of our dictionary functions. The set $S$ is a subspace of the set of all analytic functions, including $y$, that map $x$ to $\R$.

\begin{definition}
We say that our dictionary, $\psi(y)$, satisfies Koopman \textit{subspace invariance} when $\mathcal{K_G}\cdot \psi(y)\in S$. A \textit{Koopman subspace invariant} dictionary is said to satisfy \textit{finite exact closure}.
\end{definition}

If some element in a dictionary does not satisfy Koopman subspace invariance, then there exists some element in the dictionary that, when acted on by the Koopman operator, cannot be represented as a linear combination of dictionary functions.  In the context of data-driven Koopman learning this means that when our dictionary contains $N$ functions, no $N$ by $N$ matrix captures the precise action of the Koopman generator.  

Finite exact closure (Koopman subspace invariance) is unlikely to be achieved, as (1) the Koopman generator may need to be infinite dimensional and (2) in the case that it does not, it is difficult to engineer models with exact closure even with an explicit knowledge of Eq. (\ref{eq:nonlinear_system}).  So, we also consider an approximate notion of closure relevant to building models from data. 
\begin{definition}\label{def:uniform} We say $\psi(y):\R^m \rightarrow \mathbb{R}^{N}$ achieves finite $\epsilon$-closure or finite closure with $O(\epsilon)$ error when there exists a $K\in \mathbb{R}^{N\times N}$ and an $\epsilon > 0$ such that 
\begin{equation}
\frac{d(\psi(y))}{dt} = K \psi(y) + \epsilon(y),
\end{equation} for the vector field $F$.

Our dictionary $\psi(y)$ achieves \textit{finite approximate closure} when, for the vector field, $F$, and every $y$, the function $\epsilon(y)$ is a bounded for every $K$ such that $||K||<\infty$.

We say that $\psi(y)$ achieves {\em uniform finite approximate closure} for some set $\mathcal{R}$ when it achieves finite closure with $|\epsilon(y)| < B \in \mathbb{R}$ for all $y \in \mathcal{R}.$
\end{definition}

We are most interested by the property of uniform finite approximate closure, especially when we can bound the constant, $B$, to be arbitrarily low. When $B$ can be bound in such a manner we have an accurate, data-driven model.

\begin{example}\label{example}
Unfortunately, closure does not come with every dictionary of observables, even if that dictionary spans the function space of the dynamic system. For example, a canonical polynomial basis, $\{1, y, y^2,...,y^n\}$, used to approximate the one-dimensional system $f(y)=y^2$, spans the dynamic system, but no model using this basis will be closed. In fact, none will achieve uniform finite approximate closure.  We illustrate what this lack of closure means when $y>>1$. In that case, the Lie derivative of $y^n$ is  \be\frac{d(y^n)}{dt} = \frac{d(y^n)}{dy}\frac{dy}{dt} = ny^{n-1}y^2  = ny^{n+1}.\ee Approximating this derivative when $y>>1$ as an $n^{th}$ degree polynomial will dramatically fail as the error will be of order $O(y^{n+1})$.  This failure will cascade back through approximations of all the other dictionary functions. Closure is a crucial property of these models! 
\end{example}

We want models with uniform finite approximate closure because, as the bounding constant goes to zero, $B \rightarrow 0$, we may use $K$ to perform stability, observability and spectral analysis. We see this is true as, in the discrete-time formulation, as $B\rightarrow 0$, $K$ approaches a projection of the action of the Koopman operator \cite{korda2018convergence}. 
When our dictionary is state-inclusive, it is trivial to project from $\psi$ to $y$ and its trajectory may yield stability insights. To keep our models meaningful in this way, all the dictionaries in this article are state-inclusive.

\section{Notation}\label{sec:AppendixNotation}
Our models, throughout this article, will have two classes of parameters. Each class refers to the distinct geometric properties of center and steepness. To create a clearer separation of function variables and parameters we will use the following notation \[\Eta(y;\mu_l,\alpha_l).\]  In this notation, $\Eta$ is a function whose variable is the vector $y$ and whose parameters are vectors $\mu_l$ and $\alpha_l$. The vectors $\mu_l$ and $\alpha_l$ refer to the geometrically distinct classes of parameters, $\mu$ for center and $\alpha$ for steepness.  A second example would be \[\eta(y_i;\mu_{li}, \alpha_{li}).\] In this notation, $\eta$ is a function whose variable is the scalar $y_i$, and whose parameters are the scalars $\mu_{li}$ and $\alpha_{li}$.  To make our equations less cumbersome we summarize all the distinct parameters with the label $\theta$. For example, \[\Eta(y;\mu_l,\alpha_l)\triangleq \Eta(y;\theta_l),\] and \[\eta(y_i;\mu_{li}, \alpha_{li})\triangleq \eta(x_i;\theta_{li}).\]

In general, our notation uses the following conventions. \begin{enumerate}
    \item Integer $i$ will be an index of measurement dimension.
    \item Integer $j$ will be an index of the first group of added dimensions.
    \item Integer $k$ will be an index of the second group of added dimensions.
    \item Integer $l$ will be an additional added dimension index.
    \item Integer $n$ will be the state dimension.
    \item Integer $m$ will be the number of  measurements.
    \item Integer $N_L$ will be the added dimensions. We subscript with $L$ in this paper as the added dimensions will correspond to conjunctive \textit{logistic} dictionary functions. 
    \item The $m\times N_L$ real-valued matrix, $w$, will be a matrix of weights. In our analysis it corresponds to a block of the Koopman Operator approximation matrix, $K$, which describes the flow of the state variables as a linear combination of nonlinear observables.
\end{enumerate}

\section{The SILL Dictionary: A Model of deepDMD's Learned Dictionary}\label{sec:SILL}
In this section, we define a new class of dictionary functions, $\psi(y)$, identified from deepDMD's solution to Eq. (\ref{eq:objective}). We will call this class \textit{State-Inclusive Logistic Liftings} (SILLs). We call them this because
\begin{enumerate}
    \item they contain the state of the vector field $F(y)$, the dynamics of the governing equations (as we have measured them) are directly included in the dictionary, so they are state-inclusive\footnote{Note that the functions that we define technically are measurement-lifting, not state-lifting. This is a consequence of how we cast our problem formulation.},
    \item the dictionaries in this class also contain nonlinear functions, all of which are conjunctive logistic functions, so these dictionaries are logistic in nature, and
    \item because each has at least one nonlinear dictionary function, the dictionary size, $N=1+m+N_L$, is greater than the number of measurements, $m$. Since $N>m$, the model using this dictionary is ``lifted'' to a higher dimension than the original measurements.
\end{enumerate}   

We previously showed that Koopman models chosen from SILL dictionaries have successfully learned global nonlinear phase-space behavior of several simple, nonlinear systems \cite{johnson2018class}.  In Section \ref{sec:SILLclosure}, we show, for the first time, that SILL dictionaries satisfy uniform finite approximate closure.


\subsection{The SILL Lifting Functions and deepDMD}
We define a multivariate conjunctive logistic function, $\Lambda:\R^m\rightarrow\R$, for a given center parameter vector $\mu_j \in \mathbb{R}^m$ and steepness parameter vector $\alpha_j \in \mathbb{R}^m$   as follows
\begin{equation}
\Lambda(y; \mu_j, \alpha_j) \triangleq \prod_{i=1}^{m}\lambda(y_i; \theta_{ji}),
\end{equation}
where the measurements are $y \in \mathbb{R}^m$ and the scalar logistic function, $\lambda:\R\rightarrow\R$, is defined as 
\begin{equation}\label{logistic}
\lambda(y_i; \mu_{ji}, \alpha_{ji}) \triangleq \frac{1}{1+e^{-\alpha_{ji} (y_i-\mu_{ji})}}.
\end{equation}
The parameters $\mu_{ji}$ define the centers or the point of activation for $\Lambda(x;\theta_j)$ along dimension $y_i$.  The parameter $\alpha_{ji}$ is a steepness or sensitivity parameter, and determines the steepness of the logistic curve in the $i^{th}$ dimension for $\Lambda(y;\theta_j)$. Conjunctive logistic functions map orthants of $R^m$ to be nearly 1 and the rest of the space to be nearly 0.

To illustrate how this conjunctive logistic function works, consider what happens if you set the vector $y$ to be constant in all but the $l^{th}$ dimension. Then \be\Lambda(y;\mu_j,\alpha_j) = (\prod_{i\neq l}c_i)\lambda(y_l;\mu_{jl}, \alpha_{jl}).\ee This is a constant times the logistic function in the $l^{th}$ dimension. When we project dictionary functions learned by deepDMD to a single dimension we observe that they likewise approximate a scaled logistic function.

Given $N$ multivariate logistic functions, we  define a SILL dictionary as $\psi : \R^m\rightarrow \R^{N}$, so that:
\begin{equation}
\psi(y) \triangleq \begin{bmatrix}
1&
y^T&
\bar\Lambda(y)^T
\end{bmatrix}^T
\end{equation}
where $\bar\Lambda(y) = [\Lambda(y;\theta_1), \hdots,\Lambda(y;\theta_N) ]^T$ is a vector of conjunctive logistic functions and $N=1+m+N_L$. We then have that $K\in\R^{N\times N}$.   This basis is measurement-inclusive. Ideally $\bar\Lambda(y)$, the measurements themselves and a constant spans each dimension of the vector field over the region of interest.


\begin{figure*}[ht]
    \centering
    \includegraphics[width=450pt]{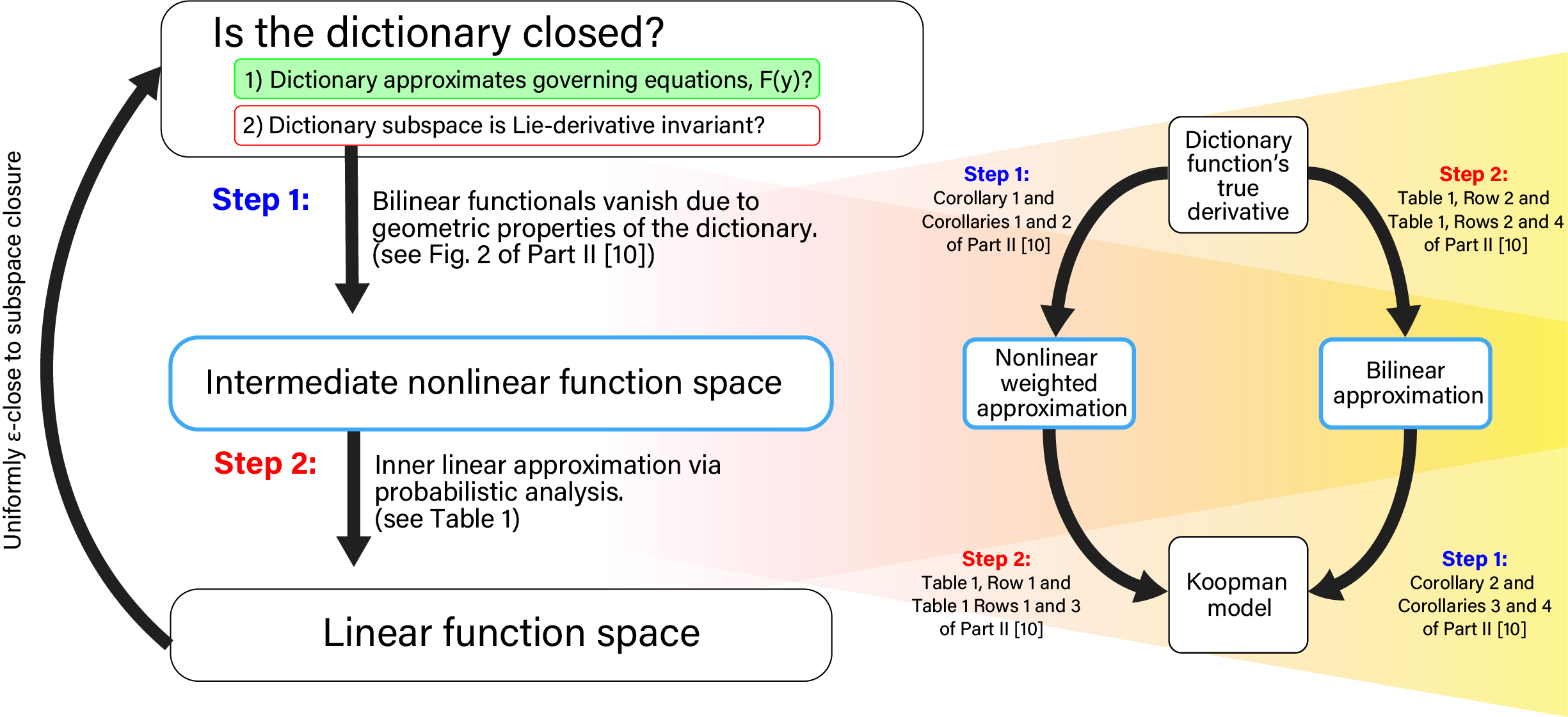}
    \caption{A visual outline of the (commutative) steps taken to prove uniform finite approximate closure of SILL dictionaries in Section \ref{sec:SILLclosure}. These same steps are applied in Part II of this paper to the heterogeneous augSILL dictionary, see \cite{johnson2022augSILL}.}
    \label{fig:paper_summary}
\end{figure*}

\section{Uniform Finite Approximate Closure of the SILL Dictionary: Step 1}\label{sec:SILLclosure}
We showed in \cite{johnson2018class} that a pure, measurement-inclusive, SILL dictionary can make effective low-dimensional Koopman models. We demonstrate that the SILL dictionary satisfies uniform finite approximate closure (see Definition \ref{def:uniform} in Section \ref{sec:error}).  In essence, uniform finite approximate closure guarantees that the dimensionality of the dictionary space does not need to diverge to infinity, while simultaneously approximating the vector field $F(y)$ sufficiently well.  This property is what makes numerical approximation of the Koopman generator equation possible. 

Recall from Example \ref{example} that when considering a new set of dictionary functions, we also have to consider the effects of dictionary explosion.  In Example \ref{example}, we computed the Lie derivatives of each dictionary function, which in turn generated new product terms that were not in the span of the existing dictionary.  Thus, for any new class of dictionary functions, a key property requisite to uniform approximate finite closure is the ability to approximate products of dictionary functions as elements of the span of the dictionary. 

In the context of the SILL dictionary, we show how approximate products of dictionary functions are elements in the span of our dictionary by showing convergence in steepness of \textit{products} of conjunctive logistic functions to a \textit{single} conjunctive logistic function. To show these bilinear terms are approximately in the span of our dictionary, we need to show that the product of two conjunctive logistic functions may be approximated as a single conjunctive logistic function as follows:
\begin{equation}\label{eq:approx1}
    \Lambda(y;\theta_l)\Lambda(y;\theta_j) \approx \Lambda(y;\theta^*)
\end{equation}
where $\theta^* = (\mu_{max}(l,j), \alpha_{max}(l,j))$ and 
\[
\mu_{max}(l,j) = \left( \max\{\mu_{l1},\mu_{j1}\}, ..., \max\{\mu_{lm},\mu_{jm}\} \right),
\] and $\alpha_{max}(l,j)$ are the $\alpha$ (steepness) values that correspond to the indices of the $\mu$'s.

Theorem \ref{thm:SILLconv} demonstrates that Eq. (\ref{eq:approx1}) is a good approximation in the limit of increasing steepness parameter $\alpha$. Specifically, it says that a bilinear combination of conjunctive logistic functions can be approximated by a single conjunctive logistic function.  This is important to show the uniform finite approximate closure of the SILL dictionary as it paves the way for Corollaries \ref{cor:logSILLApprox1} and \ref{cor:logSILLApprox2}. These corollaries are key steps in understanding the error between the Koopman model using the SILL dictionary and the true Lie derivatives of these dictionary functions (see \textbf{Step 1} in Fig. \ref{fig:paper_summary}).   The proofs of Theorem \ref{thm:SILLconv} and Corollary \ref{cor:logSILLApprox1} are in Section \ref{sec:AppendixProofs} of the Appendix.

To characterize the value of our approximation we need to define the regions where our approximation's efficacy is capped. Theorem \ref{thm:SILLconv} will demonstrate that the approximation in Eq. (\ref{eq:approx1}) can be arbitrarily good outside of a special set of hyperplanes.
Given a SILL dictionary, $\psi(y)$, we define a set of $mN_L$, $m-1$ dimensional hyperplanes in $\R^m$ corresponding to the centers of each 
$M_{\bar\Lambda} \triangleq \{ y:\mbox{there exists } i\in\{1,2,...,m\} \mbox{ and } j\in \{1,2,...,N_L\} \mbox{ so that } y_i=\mu_{ji} \}.$ This defines specific measurements where the approximation in Eq. (\ref{eq:approx1}) has maximal error and the error cannot be reduced. In practice, this does not stop the SILL dictionary from achieving high levels of performance for system identification \cite{johnson2022augSILL}. This makes sense as the set $M_{\bar\Lambda}$ is a finite collection of $m-1$ dimensional sets in $\R^m$, and is therefore of Lebesgue measure zero.

\begin{theorem}\label{thm:SILLconv}
Under Assumption \ref{assump:order}, if the measurements, $y$, do not exactly match the SILL dictionary functions corresponding center parameters, $y\not\in M_{\bar\Lambda}$, then, as the steepness parameters go to infinity, the product of two conjunctive logistic function will exponentially approach a single conjunctive logistic function in the dictionary, specifically, $\alpha\rightarrow\infty$, \[\Lambda(y;\theta_l)  \Lambda(y;\theta_j) - \Lambda(y;\theta^*)\rightarrow 0\] at a rate of $e^{-c\alpha}$, for some positive constant $c$.
\end{theorem}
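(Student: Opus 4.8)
The plan is to reduce the multivariate claim to a one-dimensional estimate on a single pair of scalar logistic functions and then reassemble via the product structure of $\Lambda$. First I would exploit the factorizations $\Lambda(y;\theta_l)\Lambda(y;\theta_j) = \prod_{i=1}^m \lambda(y_i;\theta_{li})\lambda(y_i;\theta_{ji})$ and $\Lambda(y;\theta^*) = \prod_{i=1}^m \lambda(y_i;\theta^*_i)$, where by construction $\theta^*_i = (\max\{\mu_{li},\mu_{ji}\}, \alpha_{\max,i})$. Since every scalar logistic takes values in $(0,1)$, all factors of both products lie in $[0,1]$, so (writing $\lambda_{li}\triangleq\lambda(y_i;\theta_{li})$ for brevity) a standard telescoping identity yields
\[
\Bigl| \prod_i a_i - \prod_i b_i \Bigr| \le \sum_{i=1}^m |a_i - b_i|,\qquad a_i = \lambda_{li}\lambda_{ji},\ b_i = \lambda(y_i;\theta^*_i),
\]
reducing the problem to bounding each coordinate difference $|\lambda_{li}\lambda_{ji} - \lambda(y_i;\theta^*_i)|$ by a single exponential in $\alpha$.

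For the scalar estimate I would fix a coordinate $i$ and, invoking Assumption \ref{assump:order} (which fixes the ordering of the centers and the joint scaling of the steepness parameters), assume without loss of generality that $\mu_{li} \ge \mu_{ji}$, so that $\theta^*_i$ selects the center $\mu_{li}$ together with its matching steepness and hence $\lambda(y_i;\theta^*_i) = \lambda_{li}$. The key algebraic observation is the exact identity
\[
\lambda_{li}\lambda_{ji} - \lambda(y_i;\theta^*_i) = \lambda_{li}\,(\lambda_{ji} - 1) = -\,\lambda(y_i;\theta_{li})\,\bigl(1 - \lambda(y_i;\theta_{ji})\bigr),
\]
which expresses the error as the product of a logistic and a complementary logistic. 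Writing these factors as fractions and bounding the denominator below by its dominant exponential term in each regime, I would show via a short case split --- whether $y_i$ lies above both centers, strictly between them, or below both --- that in every regime at least one factor decays like $e^{-c_i\alpha}$ while the other stays bounded by $1$, with $c_i$ controlled by $\min\{|y_i-\mu_{li}|,|y_i-\mu_{ji}|\}$ times the relevant steepness.

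Combining the two ingredients, I would set $c = \min_i c_i$, which is strictly positive precisely because the hypothesis $y \notin M_{\bar\Lambda}$ guarantees $y_i \ne \mu_{li}$ and $y_i \ne \mu_{ji}$ in every coordinate, and conclude that $|\Lambda(y;\theta_l)\Lambda(y;\theta_j) - \Lambda(y;\theta^*)| \le m\,e^{-c\alpha} \to 0$. I expect the main obstacle to be the scalar case analysis: one must verify in each of the three positional regimes that the correct exponential dominates the sigmoid product, and must confirm that the rate constant degrades gracefully (rather than vanishing) as $y$ approaches, but does not touch, the center hyperplanes. This is exactly where excluding $M_{\bar\Lambda}$ and the ordering supplied by Assumption \ref{assump:order} are essential, since on those hyperplanes --- in particular where the two centers coincide --- the error is $O(1)$ and no exponential rate can hold.
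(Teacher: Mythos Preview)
Your argument is correct and rests on the same core mechanism as the paper's: a sign case-split on each $y_i-\mu_{\cdot i}$ showing that at least one logistic factor is bounded by $e^{-c\alpha}$ while the remaining factors stay in $[0,1]$. The structural difference is that you first telescope the multivariate difference $\prod_i a_i - \prod_i b_i$ into a sum of one-dimensional errors and then handle each coordinate in isolation via the identity $\lambda_{li}\lambda_{ji}-\lambda(y_i;\theta^*_i)=-\lambda_{li}(1-\lambda_{ji})$, whereas the paper keeps the full product intact, writes the error as the single fraction $\Lambda(y;\theta_l)\Lambda(y;\theta_j)\bigl(1-\Lambda(y;\theta^*)^{-1}\bigr)$, and runs the sign case-split directly on the multivariate numerator and denominator. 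Your decomposition is a bit longer but more transparent: it makes the rate constant $c=\min_i c_i$ explicit as the smallest coordinate gap $\min_i\min\{|y_i-\mu_{li}|,|y_i-\mu_{ji}|\}$ times the relevant steepness, and it shows that the total order in Assumption~\ref{assump:order} is really only needed so that $\theta^*$ coincides with one of the dictionary parameters, not for the convergence estimate itself. The paper's direct multivariate manipulation is terser but leaves the rate implicit and leans on the total order throughout.
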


Theorem \ref{thm:SILLconv} implies an intermediate result to demonstrating uniform finite approximate closure of the SILL basis in approximating a Koopman generator. This result is given in Corollary \ref{cor:logSILLApprox1} and it connects Theorem \ref{thm:SILLconv} to the Lie derivative of a SILL dictionary function.  

Since our Koopman models are linear in nature, each Lie derivative will need to be approximated as a linear combination, or weighted sum, of dictionary functions. However, the Lie derivative of each dictionary function is a nonlinear weighted sum of bilinear terms. Corollary \ref{cor:logSILLApprox1} replaces all of the bilinear terms in this nonlinear weighted sum with individual dictionary functions.  This gives us an intermediate approximation that is nearly the linear combination of dictionary functions that we need for a Koopman model.

\begin{corollary}\label{cor:logSILLApprox1}
Under the assumptions of Theorem \ref{thm:SILLconv}, when $F$ is spanned by a SILL dictionary, the Lie derivative of a conjunctive logistic function exponentially approaches a finite weighted sum of conjunctive logistic functions as the steepnesses of the functions goes to infinity. Specifically, \be\label{eq:nonlinearCombforSILL}\dot\Lambda(y;\theta_l)\rightarrow \sum_{i=1}^{n}\sum_{j=1}^{N_L}\alpha_{li}w_{ij}(1-\lambda(y_i;\theta_{li}))\Lambda(y;\theta^*)\ee exponentially as $\alpha\rightarrow\infty$.
\end{corollary}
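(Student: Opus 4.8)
The plan is to compute the Lie derivative of a single conjunctive logistic function $\Lambda(y;\theta_l)$ explicitly, expand it into a nonlinear weighted sum of bilinear products of conjunctive logistic functions, and then apply Theorem \ref{thm:SILLconv} term-by-term to collapse each bilinear product into a single dictionary function.

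First I would compute $\dot\Lambda(y;\theta_l)$ directly from the definition $\Lambda(y;\theta_l) = \prod_{i=1}^m \lambda(y_i;\theta_{li})$. Using the product rule and the chain rule, and recalling that $F$ is spanned by the SILL dictionary so that each $\dot y_i$ is itself (to the relevant order) a linear combination $\sum_j w_{ij}\Lambda(y;\theta_j)$ of dictionary functions, the derivative becomes
\begin{equation}
\dot\Lambda(y;\theta_l) = \sum_{i=1}^m \frac{\partial \Lambda(y;\theta_l)}{\partial y_i}\,\dot y_i.
\end{equation}
The key elementary fact is the logistic identity $\lambda' = \alpha\,\lambda(1-\lambda)$, which yields $\partial\Lambda(y;\theta_l)/\partial y_i = \alpha_{li}(1-\lambda(y_i;\theta_{li}))\Lambda(y;\theta_l)$. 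Substituting the state expansion for $\dot y_i$ produces a double sum of the form $\sum_{i}\sum_{j}\alpha_{li}w_{ij}(1-\lambda(y_i;\theta_{li}))\,\Lambda(y;\theta_l)\Lambda(y;\theta_j)$, in which the bilinear products $\Lambda(y;\theta_l)\Lambda(y;\theta_j)$ are exactly the objects addressed by Theorem \ref{thm:SILLconv}.

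Next I would invoke Theorem \ref{thm:SILLconv} to replace each product $\Lambda(y;\theta_l)\Lambda(y;\theta_j)$ with the single conjunctive logistic function $\Lambda(y;\theta^*)$, where $\theta^*$ depends on $(l,j)$ through the coordinatewise maxima of the centers. Since the theorem guarantees convergence at rate $e^{-c\alpha}$ for each fixed $y\notin M_{\bar\Lambda}$, and the sum is \emph{finite} (over $i=1,\dots,n$ and $j=1,\dots,N_L$), the total error is bounded by a finite sum of exponentially decaying terms, each scaled by the bounded prefactor $\alpha_{li}w_{ij}(1-\lambda(y_i;\theta_{li}))$. This recovers exactly the claimed limit in Eq. (\ref{eq:nonlinearCombforSILL}) with exponential convergence as $\alpha\rightarrow\infty$.

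The main obstacle I anticipate is controlling the prefactor $\alpha_{li}$, which itself grows without bound as $\alpha\rightarrow\infty$ and multiplies each exponentially small error term. I would need to verify that the product $\alpha_{li}\,e^{-c\alpha}$ still tends to zero, i.e. that the exponential decay from Theorem \ref{thm:SILLconv} dominates the linear-in-$\alpha$ growth of the coefficient; since $\alpha e^{-c\alpha}\to 0$ for any $c>0$, this holds, but it must be stated carefully because it is the reason the ``exponentially'' qualifier survives the differentiation. A secondary point is confirming that the factor $(1-\lambda(y_i;\theta_{li}))$, which lies in $(0,1)$, together with the bounded weights $w_{ij}$, keeps the overall coefficient uniformly bounded so the finite sum of errors remains well-controlled.
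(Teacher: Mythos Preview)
Your proposal is correct and follows essentially the same approach as the paper: compute $\dot\Lambda(y;\theta_l)$ via the chain rule and the logistic identity $\lambda' = \alpha\lambda(1-\lambda)$, expand $\dot y_i$ using the assumption that $F$ lies in the span of the dictionary to obtain the double sum $\sum_{i}\sum_{j}\alpha_{li}w_{ij}(1-\lambda(y_i;\theta_{li}))\Lambda(y;\theta_l)\Lambda(y;\theta_j)$, and then apply Theorem \ref{thm:SILLconv} term-by-term. Your discussion of the growing prefactor $\alpha_{li}$ is in fact more careful than the paper's own proof, which simply invokes Theorem \ref{thm:SILLconv} on the finite sum without commenting on the competition between the linear-in-$\alpha$ coefficient and the $e^{-c\alpha}$ decay.
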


The intermediate approximation to the Lie derivative of a conjunctive logistic function given in Corollary \ref{cor:logSILLApprox1} will be further approximated as a Koopman model in Section \ref{sec:averageErrorSILL}. This intermediate approximation is not a linear combination of dictionary functions like a Koopman model would be. However, the error between this intermediate approximation and the true Lie derivative is uniformly bounded by the intermediate bounding constant \be \bar B_1 = \max_{y\in M} \sum_{i=1}^{n}\sum_{j=1}^{N_L}\alpha_{li}w_{ij}(1-\lambda(y_i;\theta_{li}))\varepsilon_{\Lambda_l\Lambda_j}(y, i,j),\ee where $\varepsilon_{\Lambda_l\Lambda_j}(y,i,j) \triangleq \Lambda(y;\theta_l)  \Lambda(y;\theta_j) - \Lambda(y;\theta^*)$ and $M$ is the set of possible measurement values. When $M=\R^m$ the intermediate bound $\bar B_1$ is finite as it is a sum and product of finite-valued functions. When $M=\R^m - M_{\bar\Lambda}$ the bound decreases exponentially as $\alpha$ becomes increasingly positive because $\varepsilon_{\Lambda_l\Lambda_j}(y,i,j)$ does so as well. This error bound for the intermediate approximation will be used in Section \ref{sec:VC} to bound the overall error of a Koopman model using the SILL dictionary. 

We now show the second step in an alternate path to characterizing the finite closure properties of the SILL dictionary. This is to approximate the SILL observables' Lie derivative as a linear combination of SILL basis functions, \be\label{eq:SILL_finalApprox}\sum_{i=1}^{m}\sum_{j=1}^{N_L}\alpha_{li}w_{ij}\Lambda(y;\theta^*), \ee where $l\in\{1,2,...,N_L\}$. This step corresponds to the lower right arrow in the right side of Fig. \ref{fig:paper_summary} and is given in Corollary \ref{cor:logSILLApprox2}.

\begin{corollary}\label{cor:logSILLApprox2}
Under the assumptions of Theorem \ref{thm:SILLconv}, the error between
\be\label{eq:SILL_linearLie} 
\sum_{i=1}^{m}\sum_{j=1}^{N_L}\alpha_{li}w_{ij}\Lambda(y;\theta_l)\Lambda(y;\theta_j)
\ee and Eq. (\ref{eq:SILL_finalApprox}) goes to zero exponentially as $\alpha\rightarrow \infty$.
\end{corollary}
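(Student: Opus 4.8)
The plan is to obtain this bound almost immediately from Theorem~\ref{thm:SILLconv} by subtracting the two expressions term-by-term. The crucial observation is that the sum in Eq.~(\ref{eq:SILL_linearLie}) and the sum in Eq.~(\ref{eq:SILL_finalApprox}) run over the identical index set $\{1,\dots,m\}\times\{1,\dots,N_L\}$ and carry the \emph{same} coefficients $\alpha_{li}w_{ij}$. Hence their difference collapses into a single double sum of the per-term residuals,
\[
\sum_{i=1}^{m}\sum_{j=1}^{N_L}\alpha_{li}w_{ij}\bigl(\Lambda(y;\theta_l)\Lambda(y;\theta_j)-\Lambda(y;\theta^*)\bigr)
=\sum_{i=1}^{m}\sum_{j=1}^{N_L}\alpha_{li}w_{ij}\,\varepsilon_{\Lambda_l\Lambda_j}(y,i,j),
\]
where $\varepsilon_{\Lambda_l\Lambda_j}(y,i,j)$ is exactly the residual introduced just before the corollary. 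No telescoping or rearrangement is needed; the entire content is to control this single finite sum.

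Next I would apply the triangle inequality and bound each factor separately. Since the index set is finite, it suffices to control each summand. The weights $w_{ij}$ are fixed finite entries of the matrix $w$, hence uniformly bounded, and for $y\notin M_{\bar\Lambda}$ Theorem~\ref{thm:SILLconv} supplies constants $C_{ij}>0$ and $c_{ij}>0$ with $|\varepsilon_{\Lambda_l\Lambda_j}(y,i,j)|\le C_{ij}e^{-c_{ij}\alpha}$. Taking $c=\min_{i,j}c_{ij}>0$ and absorbing the finitely many prefactors into a single constant, each residual is $O(e^{-c\alpha})$ uniformly in $(i,j)$.

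The one subtlety, and the only place the argument is not purely mechanical, is the coefficient $\alpha_{li}$, which itself grows as the steepnesses are driven to infinity. Thus a typical summand is a product of a factor growing (at most linearly) in $\alpha$ with a factor decaying like $e^{-c\alpha}$. I would resolve this with the elementary fact that exponential decay dominates polynomial growth: for any $0<c'<c$ one has $\alpha_{li}e^{-c\alpha}=O(e^{-c'\alpha})\to 0$. Consequently the whole sum is $O(\alpha e^{-c\alpha})$, which vanishes exponentially as $\alpha\to\infty$, establishing the claim. The hard part, such as it is, will be pinning down the precise meaning of ``$\alpha\to\infty$'' (for instance a common scale multiplying every steepness parameter), since once that convention is fixed the linear growth of $\alpha_{li}$ is harmlessly swallowed by the exponential rate furnished by Theorem~\ref{thm:SILLconv}.
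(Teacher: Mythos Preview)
Your proposal is correct and takes essentially the same approach as the paper, which simply states that ``the proof follows from a direct application of Theorem~\ref{thm:SILLconv}.'' In fact you are more careful than the paper: your observation that the growing coefficient $\alpha_{li}$ must be absorbed by the exponential rate $e^{-c\alpha}$ is a genuine subtlety that the paper's one-line proof (and its subsequent bound $\tilde B_2$) passes over without comment.
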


\begin{proof}
The proof follows from a direct application of Theorem \ref{thm:SILLconv}.
$\blacksquare$\end{proof}

The error described for the approximation in Corollary \ref{cor:logSILLApprox2} is   uniformly bounded by \be \tilde B_2 = \max_{y} \sum_{i=1}^{m}\sum_{j=1}^{N_L}\alpha_{li}w_{ij}\varepsilon_{\Lambda_l\Lambda_j}(y,i,j), \ee where $\varepsilon_{\Lambda_l\Lambda_j}$ is defined as above, and this bound likewise goes to zero exponentially as $\alpha$ increases. The bounding constant $B$ which we will use to demonstrate uniform finite closure will be a function of $\bar B_1$, $\tilde B_2$ as well as two other bounds. This relationship is in Section \ref{sec:VC}.

The end result of Eq. (\ref{eq:nonlinearCombforSILL}) is a nonlinear combination of dictionary functions and Eq. (\ref{eq:SILL_linearLie}) is an approximation of the Lie derivative, $\dot\Lambda$.  We explore the approximation quality in Section \ref{sec:averageErrorSILL}. Section \ref{sec:averageErrorSILL} also ties these results to conclude that the SILL dictionary satisfies uniform finite approximate closure and that the bounding constant goes to zero, $B\rightarrow 0$, exponentially as steepness of the dictionary functions increases and the number of measurements increase. Thus, SILL dictionary functions define a spanning set for Koopman observables.

\begin{table*}[ht]
    \centering
    \begin{tabular}{|p{16mm}|p{44mm}|p{54mm}|p{28.5mm}|}
    \hline
       \textbf{Reference}  &  \textbf{Approximation} & \textbf{Difference (Error)} & \textbf{Error Bound} \\
    \hline
       (\ref{eq:nonlinearCombforSILL})   & \(\begin{aligned} &\sum_{i=1}^{m}\sum_{j=1}^{N_L} \alpha_{li}w_{ij} \Lambda(y;\theta^*) \end{aligned}\) & \(\begin{aligned} &\sum_{i=1}^{m}\sum_{j=1}^{N_L} \alpha_{li}w_{ij} \lambda(y_i;\theta_{li}) \Lambda(y;\theta^*) \end{aligned}\)  & \(\begin{aligned}&\sum_{i=1}^{m}\sum_{j=1}^{N_L}\frac{\nu_{ij}}{2^{m+1}} \end{aligned}\) \\
    \hline
        (\ref{eq:SILL_lie_derivative})  & \(\begin{aligned}  &\sum_{i=1}^{m}\sum_{j=1}^{N_L} \alpha_{li}w_{ij} \Lambda(y;\theta_l)\Lambda(y;\theta_j) 
    \end{aligned}\)  & \(\begin{aligned} 
&\sum_{i=1}^{m}\sum_{j=1}^{N_L} \alpha_{li}w_{ij} \lambda(y_i;\theta_{li})\Lambda(y;\theta_l)\Lambda(y;\theta_j) \end{aligned}\)  & \(\begin{aligned}&\sum_{i=1}^{m}\sum_{j=1}^{N_L}\frac{\nu_{ij}}{2^{2m+1}} \end{aligned}\) \\
    
    \hline
    \end{tabular}
    \caption{Approximations to and properties of error bounds for the equations referred to in the \textbf{Reference} column. The reference equation is approximated as the corresponding equation in the \textbf{Approximation} column. We give the error of this approximation in the \textbf{Difference (Error)} column. The \textbf{Error Bound} column gives a bound on this error.}
    \label{tab:linearityErrorAugSILL}
\end{table*}

\section{Uniform Finite Approximate Closure of the SILL Dictionary: Step 2}\label{sec:averageErrorSILL}

This section simultaneously addresses the approximation of two related mathematical objects.
\begin{enumerate}
    \item The nonlinear combination of SILL dictionaries in Equation (\ref{eq:nonlinearCombforSILL}) with a linear combination. 
    \item The bilinear combination in Equation (\ref{eq:SILL_linearLie}) with the Lie derivatives of a conjunctive logistic function.
\end{enumerate}
So, there are two distinct approximations, one for Corollary \ref{cor:logSILLApprox1} and one for Corollary \ref{cor:logSILLApprox2}. The explicit approximations are given in Table \ref{tab:linearityErrorAugSILL}. In the sense of the steps shown in Fig. \ref{fig:paper_summary}, each of these approximations is a step closer to the Koopman model. The relationship between these approximations, Corollaries \ref{cor:logSILLApprox1} and \ref{cor:logSILLApprox2}, and the uniform finite approximate closure of the SILL dictionary is schematically depicted in Fig. \ref{fig:paper_summary}.

To show uniform finite approximate closure, we need to characterize the error, \be\label{eq:overallError} \epsilon(y) = \frac{d\psi(y)}{dt} - K\psi(y).\ee  We do so by choosing a specific approximation for the time derivative of each nonlinear dictionary function in the SILL basis.

Characterizing the closure of a dictionary means understanding the error between the Koopman model built with that dictionary and the true Lie derivative of each dictionary function. In Section \ref{sec:SILLclosure} we used mathematical analysis to show that the approximation error of some models will go to zero in the limit of high steepness. These models do not fully bridge the gap between our Koopman model and the true Lie derivative. This section characterizes additional approximation errors to fully connect our Koopman model to the true Lie derivative. Then, we uniformly bound those errors to show the uniform finite approximate closure of the SILL dictionary.

For all $l\in\{1,2,...,N_L\}$, the error of approximating Eq. (\ref{eq:SILL_nonlinComb}) with Eq. (\ref{eq:SILL_finalApprox}) is: 
\be\label{eq:SILL_lin_error}
 \sum_{i=1}^{m}\sum_{j=1}^{N_L}\alpha_{li}w_{ij}\lambda(y_i;\theta_{li})\Lambda(y;\theta^*), 
\ee  and the error of approximating Eq. (\ref{eq:SILL_lie_derivative}) with Eq. (\ref{eq:SILL_linearLie}) is: \be\label{eq:SILL_linProd_error}
 \sum_{i=1}^{m}\sum_{j=1}^{N_L}\alpha_{li}w_{ij}\lambda(y_i;\theta_{li})\Lambda(y;\theta_l)\Lambda(y;\theta_j).
\ee

To get a grasp on the kind of errors we can expect when modeling with the SILL basis we need to get a grasp on the magnitude of Eq. (\ref{eq:SILL_lin_error}) and Eq. (\ref{eq:SILL_linProd_error}).

We do so by sampling possible parameter and state values from uniform distributions over a symmetric interval and then computing distributions of one dimensional logistic functions.  We then do the same for terms in the sums in Equations (\ref{eq:SILL_lin_error}) and  (\ref{eq:SILL_linProd_error}), under the assumption that $w_{ij}=1$. In doing so, we notice an exponential decrease in expected error as the number of measurements increases.

\subsection{Expectation of Approximation Error Vanishes}\label{sec:augSILL_error}

To understand the exponential decrease in expected approximation error we start with the expected values of a single dimensional logistic function. We do so with parameters and measurement values sampled from uniform distributions defined on the interval $[-a,a]$.  We choose this statistical model for how our data and parameters are sampled, because  1) the data and parameters are assumed to belong to a bounded continuum, and 2) the uniform distribution is the maximum entropy distribution for a continuous random variable on a finite interval.  Since our error terms are weighted sums of products of  logistic functions we, under the assumption of independence, can compute the expected value of our error terms as a weighted sum of the distinct expected values of logistic functions. This is justified by the linearity of and product rule of expectation under independence. 

We cannot explicitly compute the probability density function (PDF) of our logistic functions, so, we compute the values of these integrals numerically. Intermediate steps and details of this approximation are in Section \ref{sec:AppendixPDF} of the Appendix. In Fig. \ref{fig:expectedVals} we show their calculated expected values and variances for symmetric uniform distributions with different values of $a$. 

We find that the expected value of a logistic function will be $1/2$ regardless of the sampling interval (see Fig. \ref{fig:expectedVals}). Its variance, as we sample in a wider interval, tends to the functional extremes of zero and one.  This is favorable for the linearity of our approximation since, for all $\varepsilon\in (0, 0.5]$, $(0.5-\varepsilon)(0.5+\varepsilon) = 0.25 - \varepsilon^2 < 0.25 = (0.5)^2$. So, products of more extreme samples are lower in value than products of samples near the expected value. This demonstrates that our bound, and the error bounds that follow, computed using the expected value of $1/2$, are conservative.

\begin{figure}[ht]
    \includegraphics[width=\linewidth ]{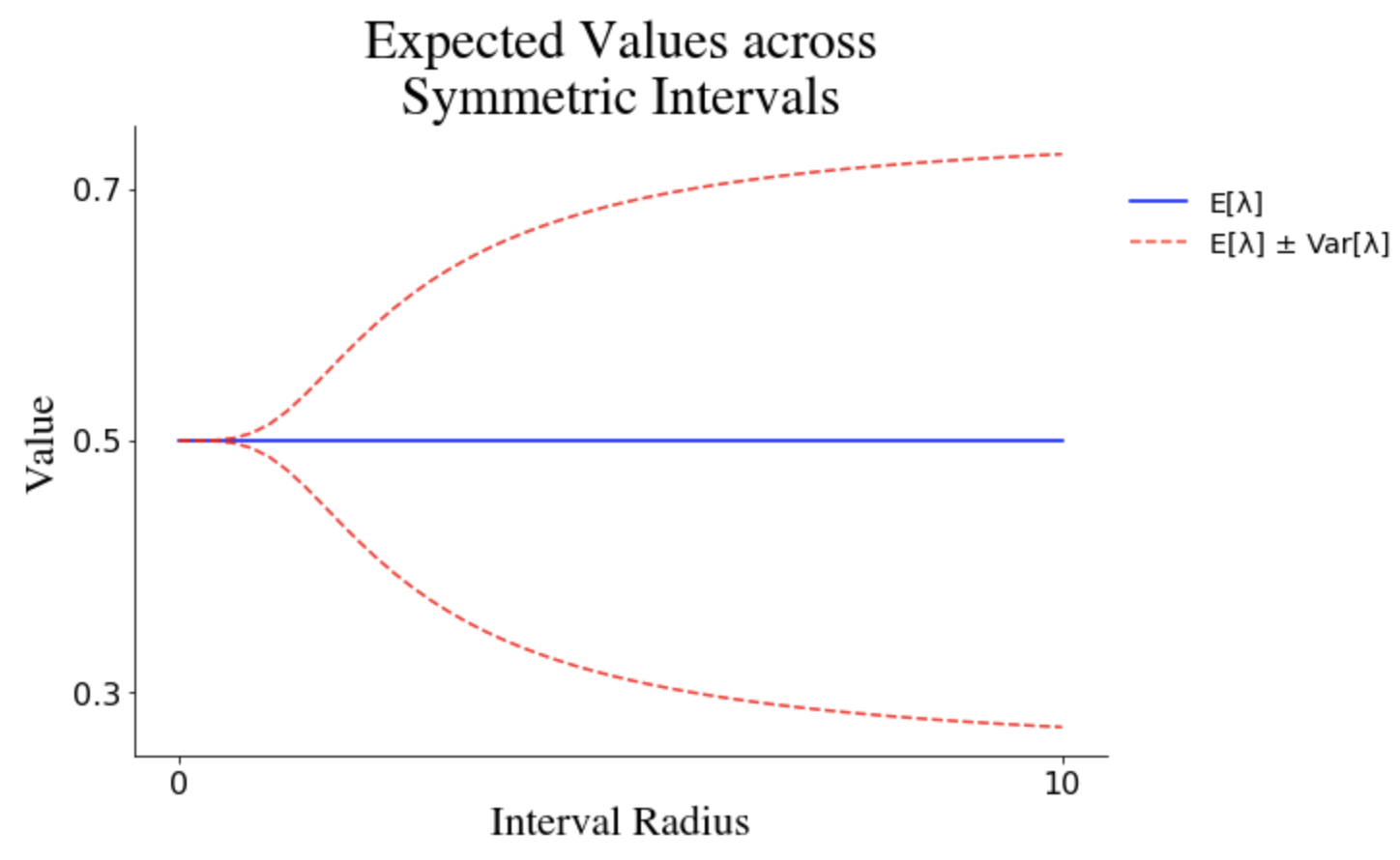}
    \caption{Expected values and variances of a logistic function with parameters and measurement values sampled from symmetric uniform distributions of various interval radii. Note that the expected value is always $1/2$.}
    \label{fig:expectedVals}
\end{figure}


We approximate a single term in the sum of the error function and extrapolate via the sum and product rule of expectation under the assumption of independence to see how nearly linear our approximation is (see Section \ref{sec:AppendixPDF} of the Appendix). We can conservatively bound the expectation of the approximation error as a product that decreases exponentially with the number of measurements.  The expected value of a conjunctive logistic function is \be\label{eq:CEBlogRBF} E[\Lambda]<\frac{1}{2^m}.\ee

We now compute the expected error given the expected value of the logistic and conjunctive logistic functions. The end result is that, applying the independence assumption, one can conservatively expect the error of each summation term in Eq. (\ref{eq:SILL_lin_error}) to decrease in the numbers of measurements, $m$, at a rate bound above by $\frac{1}{2^{m+1}}$ and  the error of each summation term in Eq. (\ref{eq:SILL_linProd_error}) to decrease in $m$ at a rate bound above by $\frac{1}{2^{2m+1}}$. In each case, these error bounds are exponentially decaying in $m$, the dimension of the measurements.

Explicitly, these intermediate error bounds are \be\label{eq:SILLerrorBounds} \bar B_2 = \sum_{i=1}^{m}\sum_{j=1}^{N_L}\frac{\nu_{ij}}{2^{m+1}} \mbox{ and }\tilde B_1=\sum_{i=1}^{m}\sum_{j=1}^{N_L}\frac{\nu_{ij}}{2^{2m+1}}, \ee where $\nu_{ij}\in\R$, where each $\nu_{ij}<a^2$ as each $\alpha_{li}$ and $w_{ij}$ is sampled from the interval $[-a, a]$.  A summary is given in Table \ref{tab:linearityErrorAugSILL}.

\subsection{Error of the final model}\label{sec:VC}
Theorem \ref{thm:SILLconv} shows that in the limit of infinitely steep dictionary functions, the error of approximating the product of two conjunctive logistic functions with a single one goes to zero exponentially. This gave us two approximation corollaries that share exponentially decreasing error.  Likewise, our probabilistic bound demonstrated that the error in approximating the Lie derivative of SILL functions as a Koopman model goes to zero exponentially, on average, as the number of measurements increases.  This means that the error, $\epsilon_l(y)$, of the approximation 
\be \label{eq:SILLapproximation}
\dot{\Lambda}(y;\theta_l)& =
\sum_{i=1}^{m}\sum_{j=1}^{N_L}\alpha_{li}w_{ij}(1-\lambda(y_i;\theta_{li}))\Lambda(y;\theta_l)\Lambda(y;\theta_j)\\& \approx \sum_{i=1}^{m}\sum_{j=1}^{N_L}\alpha_{li}w_{ij}\Lambda(y;\theta^*)
\ee will be bound by the sum $\bar B_1 + \bar B_2$, as well as the sum $\tilde B_1 + \tilde B_2$\footnote{If $|\dot\Lambda-A_1|<\bar B_1$ and $|A_1 - A_2|<\bar B_2$, where $A_1$ is the limiting approximation in Eq. \ref{eq:nonlinearCombforSILL} and $A_2$ is the approximation on row 1 of Table \ref{tab:linearityErrorAugSILL}. Then $|\dot\Lambda-A_2|< \bar B_1 + \bar B_2$ by the triangle inequality. This argument applies to $\tilde B_1$ and $\tilde B_2$ without loss of generality where $A_1$ is the approximation on row 2 of Table \ref{tab:linearityErrorAugSILL} and $A_2$ is Eq. (\ref{eq:SILL_finalApprox}).}. This means that the error will   be uniformly bounded by the constant \be\min\{\bar B_1 + \bar B_2, \tilde B_1 + \tilde B_2\} \triangleq B>0.\ee The error will also exponentially go to zero with increasingly steep logistic functions and an increasingly large number of measurements.  The error bounds tied to the number of measurements can be used to choose the number of observables to measure to build a good Koopman model.

The bound, $B$, on our closure error, $\epsilon(y)$, goes to zero, $B\rightarrow 0$ at the extremes of large measurements and steep dictionary functions. This bound shows that the SILL dictionary satisfies uniform finite approximate closure. 

\be
|\dot\Lambda_l - A_2| < B(\alpha, m)\leq \bar B_1(\alpha, m) + \bar B_2(\alpha, m) \rightarrow 0
\ee exponentially as $\alpha$ and $m$ increase when $y\in\R^m-M_{\bar\Lambda}$ and $|\theta|$, $|K|$ are bounded. These are reasonable assumptions as the set $M_{\bar\Lambda}$ is of measure zero in $\R^m$ and that $|\theta|$, $|K|$ being bounded is required for any numerical application. Note that $B(\alpha, m)$ is constant given a set of possible measurements and a bound on the model parameters. It only varies in model parameters and hyperparameters.

In practice, these limiting conditions of extreme steepness and measurement dimension are not required. This constant in measurements, $B$, appears to be a wide upper bound. One captures strong nonlinear system features using the SILL basis for a low dimensional dictionary and few measurements with no restrictions on the steepness parameter  \cite{johnson2018class}.

\section{Conclusion}\label{sec:conclusion}
Learning models in a data-driven setting using human-defined dictionaries results in high-dimensional, over-parameterized representations of what could be simple physical phenomena. DeepDMD and other ANN learning techniques address these issues but at the cost of extensive computational time. Moreover, the dictionaries learned by deepDMD are ad hoc, randomly constructed, and difficult to interpret. 

To investigate how deepDMD finds successful Koopman invariant subspaces, we extracted simple features of the dictionary functions learned by deepDMD and looked for properties to explain their success.  We discovered a Koopman dictionary which we call the State-Inclusive Logistic 
Lifting (SILL) dictionary. We showed, in Section \ref{sec:SILLclosure}, that SILL model error drops exponentially when steepness parameters increase and more measurements are included in the data. This suggests that SILL dictionaries can be used to construct high-fidelity, low-dimensional, structured, and interpretable Koopman models.  Quantifying how many measurements should be taken to build an accurate Koopman model is an intriguing future research direction.

The success of any measurement-inclusive Koopman dictionary depends on more than how the dictionary approximates the evolution of the measurements in time. It also depends on how well linear combinations of the dictionary functions approximate the dictionary function's Lie derivatives. The SILL dictionary is fully specified with closed-form analytical expressions (unlike deepDMD) and as a consequence of the theoretical results in this paper, satisfies a unique numerical property of uniform approximate finite closure.  Our methodology provides a template for understanding how deep neural networks successfully approximate governing equations \cite{brunton2016discovering}, the action of operators and their spectra \cite{mezic2005spectral}, and dynamical systems \cite{mezic2013analysis}.  Further, these results provide a pattern for improving scalability and interpretability of dictionary-based learning models for dynamical system identification.

More can be done to improve dimensional scalability of dictionary models. The deepDMD algorithm suggests further innovations to improve dictionary learning. One such innovation that we see in the dictionaries learned by deepDMD is the use of heterogeneous dictionaries. This fruitful research direction will be given in part 2 of this paper \cite{johnson2022augSILL}.


\begin{ack}
Any opinions, findings and conclusions or recommendations expressed in this material are
those of the author(s) and do not necessarily reflect the
views of the Defense Advanced Research Projects Agency
(DARPA), the Department of Defense, or the United States
Government. This work was supported partially by a Defense Advanced Research Projects Agency (DARPA) Grant
No. FA8750-19-2-0502, PNNL Grant No. 528678, ICB Grant No. W911NF-19-D-0001 and No. W911NF-19-F-0037, and ARO Young Investigator Grant No. W911NF-20-1-0165.
\end{ack}

\bibliographystyle{plain}        
\bibliography{bibliography}           

\begin{thebibliography}{10}

\bibitem{arbabi2017ergodic}
Hassan Arbabi and Igor Mezic.
\newblock Ergodic theory, dynamic mode decomposition, and computation of
  spectral properties of the koopman operator.
\newblock {\em SIAM Journal on Applied Dynamical Systems}, 16(4):2096--2126,
  2017.

\bibitem{brunton2016koopman}
Steven~L Brunton, Bingni~W Brunton, Joshua~L Proctor, and J~Nathan Kutz.
\newblock Koopman invariant subspaces and finite linear representations of
  nonlinear dynamical systems for control.
\newblock {\em PloS one}, 11(2):e0150171, 2016.

\bibitem{brunton2016discovering}
Steven~L Brunton, Joshua~L Proctor, and J~Nathan Kutz.
\newblock Discovering governing equations from data by sparse identification of
  nonlinear dynamical systems.
\newblock {\em Proceedings of the national academy of sciences},
  113(15):3932--3937, 2016.

\bibitem{budivsic2012applied}
Marko Budi{\v{s}}i{\'c}, Ryan Mohr, and Igor Mezi{\'c}.
\newblock Applied koopmanism a.
\newblock {\em Chaos: An Interdisciplinary Journal of Nonlinear Science},
  22(4):047510, 2012.

\bibitem{duchi2011adaptive}
John Duchi, Elad Hazan, and Yoram Singer.
\newblock Adaptive subgradient methods for online learning and stochastic
  optimization.
\newblock {\em Journal of machine learning research}, 12(7), 2011.

\bibitem{haseli2021learning}
Masih Haseli and Jorge Cort{\'e}s.
\newblock Learning koopman eigenfunctions and invariant subspaces from data:
  Symmetric subspace decomposition.
\newblock {\em IEEE Transactions on Automatic Control}, 2021.

\bibitem{hasnain2022learning}
Aqib Hasnain, Shara Balakrishnan, Dennis~M Joshy, Steven~B Haase, Jen Smith,
  and Enoch Yeung.
\newblock Learning transcriptome dynamics for discovery of optimal genetic
  reporters of novel compounds.
\newblock {\em bioRxiv}, 2022.

\bibitem{hasnaindata}
Aqib Hasnain, Nibodh Boddupalli, Shara Balakrishnan, and Enoch Yeung.
\newblock Steady state programming of controlled nonlinear systems via deep
  dynamic mode decomposition.
\newblock In {\em 2020 American Control Conference (ACC)}, pages 4245--4251.
  IEEE, 2020.

\bibitem{hornik1991approximation}
Kurt Hornik.
\newblock Approximation capabilities of multilayer feedforward networks.
\newblock {\em Neural networks}, 4(2):251--257, 1991.

\bibitem{johnson2022augSILL}
Charles~A. Johnson, Shara Balakrishnan, and Enoch Yeung.
\newblock Subspace invariance in koopman operators–part 2: Heterogeneous
  dictionary mixing to approximate subspace invariance.
\newblock {\em Automatica}, 2023.

\bibitem{johnson2018class}
Charles~A Johnson and Enoch Yeung.
\newblock A class of logistic functions for approximating state-inclusive
  koopman operators.
\newblock In {\em 2018 Annual American Control Conference (ACC)}, pages
  4803--4810. IEEE, 2018.

\bibitem{kingma2014adam}
Diederik~P Kingma and Jimmy Ba.
\newblock Adam: A method for stochastic optimization.
\newblock {\em arXiv preprint arXiv:1412.6980}, 2014.

\bibitem{korda2018convergence}
Milan Korda and Igor Mezi{\'c}.
\newblock On convergence of extended dynamic mode decomposition to the koopman
  operator.
\newblock {\em Journal of Nonlinear Science}, 28(2):687--710, 2018.

\bibitem{lasota1998chaos}
Andrzej Lasota and Michael~C Mackey.
\newblock {\em Chaos, fractals, and noise: stochastic aspects of dynamics},
  volume~97.
\newblock Springer Science \& Business Media, 1998.

\bibitem{lusch2018deep}
Bethany Lusch, J~Nathan Kutz, and Steven~L Brunton.
\newblock Deep learning for universal linear embeddings of nonlinear dynamics.
\newblock {\em Nature communications}, 9(1):1--10, 2018.

\bibitem{mauroy2019koopman}
Alexandre Mauroy and Jorge Goncalves.
\newblock Koopman-based lifting techniques for nonlinear systems
  identification.
\newblock {\em IEEE Transactions on Automatic Control}, 65(6):2550--2565, 2019.

\bibitem{mauroy2020Koopman}
Alexandre Mauroy, Y~Susuki, and I~Mezi{\'c}.
\newblock {\em The Koopman Operator in Systems and Control}.
\newblock Springer, 2020.

\bibitem{mezic2005spectral}
Igor Mezi{\'c}.
\newblock Spectral properties of dynamical systems, model reduction and
  decompositions.
\newblock {\em Nonlinear Dynamics}, 41(1):309--325, 2005.

\bibitem{mezic2013analysis}
Igor Mezi{\'c}.
\newblock Analysis of fluid flows via spectral properties of the koopman
  operator.
\newblock {\em Annual Review of Fluid Mechanics}, 45:357--378, 2013.

\bibitem{nandanoori2020model}
Sai~Pushpak Nandanoori, Soumya Kundu, Seemita Pal, Khushbu Agarwal, and Sutanay
  Choudhury.
\newblock Model-agnostic algorithm for real-time attack identification in power
  grid using koopman modes.
\newblock In {\em 2020 IEEE International Conference on Communications,
  Control, and Computing Technologies for Smart Grids (SmartGridComm)}, pages
  1--6. IEEE, 2020.

\bibitem{otto2019linearly}
Samuel~E Otto and Clarence~W Rowley.
\newblock Linearly recurrent autoencoder networks for learning dynamics.
\newblock {\em SIAM Journal on Applied Dynamical Systems}, 18(1):558--593,
  2019.

\bibitem{proctor2018generalizing}
Joshua~L Proctor, Steven~L Brunton, and J~Nathan Kutz.
\newblock Generalizing koopman theory to allow for inputs and control.
\newblock {\em SIAM Journal on Applied Dynamical Systems}, 17(1):909--930,
  2018.

\bibitem{rowley2009spectral}
Clarence~W Rowley, Igor Mezi{\'c}, Shervin Bagheri, Philipp Schlatter, and
  Dan~S Henningson.
\newblock Spectral analysis of nonlinear flows.
\newblock {\em Journal of fluid mechanics}, 641:115--127, 2009.

\bibitem{schmid2010dynamic}
Peter~J Schmid.
\newblock Dynamic mode decomposition of numerical and experimental data.
\newblock {\em Journal of Fluid Mechanics}, 656:5--28, 2010.

\bibitem{sinha2018robust}
Subhrajit Sinha, Bowen Huang, and Umesh Vaidya.
\newblock Robust approximation of koopman operator and prediction in random
  dynamical systems.
\newblock In {\em 2018 Annual American Control Conference (ACC)}, pages
  5491--5496. IEEE, 2018.

\bibitem{takeishi2017learning}
Naoya Takeishi, Yoshinobu Kawahara, and Takehisa Yairi.
\newblock Learning koopman invariant subspaces for dynamic mode decomposition.
\newblock {\em arXiv preprint arXiv:1710.04340}, 2017.

\bibitem{wehmeyer2018time}
Christoph Wehmeyer and Frank No{\'e}.
\newblock Time-lagged autoencoders: Deep learning of slow collective variables
  for molecular kinetics.
\newblock {\em The Journal of chemical physics}, 148(24):241703, 2018.

\bibitem{williams2015data}
Matthew~O Williams, Ioannis~G Kevrekidis, and Clarence~W Rowley.
\newblock A data--driven approximation of the koopman operator: Extending
  dynamic mode decomposition.
\newblock {\em Journal of Nonlinear Science}, 25(6):1307--1346, 2015.

\bibitem{williams2014kernel}
Matthew~O Williams, Clarence~W Rowley, and Ioannis~G Kevrekidis.
\newblock A kernel-based approach to data-driven koopman spectral analysis.
\newblock {\em arXiv preprint arXiv:1411.2260}, 2014.

\bibitem{yeung2019learning}
Enoch Yeung, Soumya Kundu, and Nathan Hodas.
\newblock Learning deep neural network representations for koopman operators of
  nonlinear dynamical systems.
\newblock In {\em 2019 American Control Conference (ACC)}, pages 4832--4839.
  IEEE, 2019.

\end{thebibliography}

\appendix

\section{Proofs of Theorems}\label{sec:AppendixProofs}

In \cite{johnson2018class} we demonstrated that imposing a total order on conjunctive logistic basis functions implied some basic closure properties of the SILL dictionary.  We show in this section proofs that apply this total order as steps to showing the uniform finite closure of the SILL and augSILL dictionaries. The needed total order is summed up in the following assumption.
\begin{assumption}\label{assump:order}
There exists a total order on the set of conjunctive logistic functions, $\Lambda(y;\theta_1) ,..., \Lambda(y;\theta_l)$, induced by the positive orthant $\mathbb{R}^m_+$, where $\theta_{l} \gtrsim \theta_{j}$ whenever $\mu_j - \mu_l \in \mathbb{R}^m_+$, and therefore $\Lambda(y;\theta_l) \geq \Lambda(y;\theta_j)$. 
\end{assumption}  

\begin{remark}
Given a finite SILL dictionary that does not satisfy Assumption \ref{assump:order}, one can enforce that Assumption \ref{assump:order} holds by adding a finite number of additional conjunctive logistic functions to the basis.
\end{remark}

In these results we do not consider the error of approximating the vector field, $F$, we instead focus on the subspace invariance of a model built from a dictionary that already captures the basic system dynamics. This assumption is given formally below. 

\begin{assumption}
The vector field, $F$, lies in the span of our SILL (or alternativly, augSILL) dictionary.
\end{assumption}

\textit{Theorem} \ref{thm:SILLconv}:
Under Assumption \ref{assump:order}, if the dictionary functions do not exactly match their corresponding center parameters, $y_i\neq\mu_{ji} $ for all $ i\in \{1, 2, ..., m\}$ and  $j\in\{1,2,...,N\}$, then, as the steepness parameters go to infinity, the product of two conjunctive logistic function will exponentially approach a single conjunctive logistic function in the dictionary, $\alpha\rightarrow\infty$, \[\Lambda(x;\theta_l)  \Lambda(x;\theta_j) - \Lambda(x;\theta^*)\rightarrow 0\] exponentially for any $l\in\{1,2,...,N_L\}$.

\begin{proof}
We now investigate the error of the approximation. Eq. (\ref{eq:approx1})'s error of approximation is 
\begin{equation}\label{Lambda_errorterm}
\begin{aligned}
\Lambda(y&;\theta_l)  \Lambda(y;\theta_j) - \Lambda(y;\theta_l) \\ 
&=\frac{1-\Lambda(y;\theta_j)^{-1}}{(\Lambda(y;\theta_l)\Lambda(y;\theta_j))^{-1}} \\
&= \frac{1 -  (1+e^{-\alpha_{l1}(y_1 - \mu_{l1})})   ... (1+e^{-\alpha_{lm}(y_m - \mu_{lm})}) }{\prod_{i=1}^{m}(1+e^{-\alpha_{li}(y_i - \mu_{li})}) (1+e^{-\alpha_{ji}(y_i - \mu_{ji})})},
\end{aligned}
\end{equation} whenever $\theta_l\gtrsim\theta_j.$

Without loss of generality, assume that $\theta_l\gtrsim\theta_j$. As we hold $y$ constant and let $\alpha\rightarrow \infty$,  for all $ i\in \{1, 2, ..., m\}$, and any possible value of $l,j$ we observe two cases. In these cases we denote a possible value of $\mu_{li}$, $\mu_{ji}$ etc. as $\mu_*$

\textbf{Case 1}, $y_i-\mu_* > 0$: As $\alpha \rightarrow \infty$ we have that $e^{-\alpha(y_i-\mu_*)}\rightarrow 0$ and so $\frac{1}{1+ e^{-\alpha_i(y_i-\mu_*)}} \rightarrow 1$. 

\textbf{Case 2}, $y_i-\mu_* < 0$: As $\alpha \rightarrow \infty$ we have that $e^{-\alpha_i(y_i-\mu_*)}\rightarrow \infty$ and so $\frac{1}{1+ e^{-\alpha_i(y_i-\mu_*)}} \rightarrow 0$ exponentially.  

So, if there exists $i\in \{1,2,...,m\}$ for every $ j$,  so that $y_i-\mu_{ji} < 0$, then Eq. (\ref{eq:approx1}) goes to 0 exponentially as $\alpha \rightarrow \infty$.  

However, if $y_i-\mu_{ji} > 0$ for all $i$, then, since $\theta_l \gtrsim \theta_j$, for all $i$ we have that $y_i-\mu_{li} > 0$ for all $i$ as well, so Eq. (\ref{Lambda_errorterm}) goes to $\frac{1-1}{1}=0$ exponentially as $\alpha \rightarrow \infty$.  
$\blacksquare$\end{proof}

\textit{Corollary}
\ref{cor:logSILLApprox1}:
Under the assumptions of Theorem \ref{thm:SILLconv}, and assuming that $F$ is spanned by a SILL dictionary, the Lie derivative of a conjunctive logistic function exponentially approaches a finite weighted sum of conjunctive logistic functions as the steepnesses of the functions goes to infinity. Specifically, \be\dot\Lambda(x;\theta_l)\rightarrow \sum_{i=1}^{n}\sum_{j=1}^{N_L}\alpha_{li}w_{ij}(1-\lambda(x_i;\theta_{li}))\Lambda(x;\theta^*)\ee exponentially as $\alpha\rightarrow\infty$.

\begin{proof} 
We assume that $F$ is spanned by our set of nonlinear dictionary functions.

This means that there exists a real-valued weighting matrix, $w\in \R^{m\times N_L}$, so that for any, $i\in\{1,2,...,m\}$, the $i^{th}$ element of $F$, $F_i$, can be written as:
\begin{equation}\label{eq:f_regression}
F_i(y) = \sum_{j=1}^{N_L} w_{ij}  \Lambda(y, \theta_j).
\end{equation}

Therefore, the time derivative of an arbitrary nonlinear observable in the SILL dictionary is: 

\begin{equation}\label{eq:SILL_lie_derivative} 
    \begin{aligned}
    \dot \Lambda(y;&\theta_l) = (\nabla_y\Lambda(y;\theta_l))^T\frac{dy}{dt} = (\nabla_y\Lambda(y;\theta_l))^TF(y)\\
    &=\sum_{i=1}^m\alpha_{li}(1-\lambda(y_i;\theta_{li}))\Lambda(y;\theta_l)F_i(y)\\
    &=\sum_{i=1}^m\alpha_{li}(1-\lambda(y_i;\theta_{li}))\Lambda(y;\theta_l)\sum_{j=1}^{N_L} w_{ij}  \Lambda(y; \theta_j)\\
    &=\sum_{i=1}^{m}\sum_{j=1}^{N_L}\alpha_{li}w_{ij}(1-\lambda(y_i;\theta_{li}))\Lambda(y;\theta_l)\Lambda(y;\theta_j).
    \end{aligned}
\end{equation}

Theorem \ref{thm:SILLconv} shows, that in the limit of steepness, the error between Eq. (\ref{eq:SILL_lie_derivative}) and \be\label{eq:SILL_nonlinComb}\sum_{i=1}^{m}\sum_{j=1}^{N_L}\alpha_{li}w_{ij}(1-\lambda(y_i;\theta_{li}))\Lambda(y;\theta^*) \ee will go to zero exponentially. $\blacksquare$\end{proof}

 \section{PDF of X(Y-Z) and Logistic Functions}\label{sec:AppendixPDF}
 Given the symmetric uniform distributed random variables X, Y and Z. We can compute the PDF of X(Y-Z), a term in logistic functions. The random variables X and Z represent the scalar parameters and the random variable Y represents the scalar measurement. So, X(Y-Z) corresponds to $\alpha(y-\mu)$. We choose X, Y and Z to be identically and independently distributed (iid) as the symmetric uniform distribution: $U(-a, a), a\in\R^+$.  We then apply the law of the unconscious statistician (LOTUS) to compute the expected value of a logistic function.

We compute the PDF of X(Y-Z), where the random variables X, Y and Z are independently and identically distributed as the symmetric uniform distribution: $U(-a, a), a\in\R^+$.

We start by computing the PDF of (Y-Z).  The PDF of each random variable is \be\label{eq:uniformPDF}
f_U(x) = \begin{cases}
 \frac{1}{2a} &\mbox{if }x\in [-a, a] \\
 0 &\mbox{otherwise}.
\end{cases}\ee  Since the random variables are symmetrically distributed, this is the same distribution as (Y+Z),  a symmetric triangular distribution.  The PDF is  \be\label{eq:PDFSum}
f_T(x) = \begin{cases}
 \frac{1}{2a} + \frac{x}{4a^2} &\mbox{if } x\in[-2a, 0) \\
 \frac{1}{2a} - \frac{x}{4a^2} &\mbox{if } x\in(0, 2a] \\
 0 &\mbox{otherwise}.
\end{cases}\ee 

Now we use the formula for the distribution of a product of random variables, \be g(z) = \int_{-\infty}^{\infty}f_T(x)f_U(z/x)\frac{1}{|x|}dx, \ee to compute the PDF of X(Y-Z).  So, \be g(z)&= \int_{-\infty}^{\infty}\begin{cases}
 \frac{1}{2a} + \frac{x}{4a^2} &\mbox{if } x\in[-2a, 0) \\
 \frac{1}{2a} - \frac{x}{4a^2} &\mbox{if } x\in(0, 2a] \\
 0 &\mbox{otherwise}
\end{cases}\\&\:\:\:\:\times\begin{cases}
 \frac{1}{2a|x|} &\mbox{if }\frac{z}{x}\in [-a, a] \\
 0 &\mbox{otherwise}
\end{cases}dx\\
&= \int_{0}^{2a}
 (\frac{1}{2a} - \frac{x}{4a^2} 
)\begin{cases}
 \frac{1}{2ax} &\mbox{if }\frac{z}{x}\in [-a, a] \\
 0 &\mbox{otherwise}
\end{cases} \\
&\:\:\:\:- \int_{-2a}^{0}
 (\frac{1}{2a} + \frac{x}{4a^2} 
)\begin{cases}
 \frac{1}{2ax} &\mbox{if }\frac{z}{x}\in [-a, a] \\
 0 &\mbox{otherwise.}
\end{cases}
\ee

The following equivalences hold, however they are only equal to $g(z)$ when $\frac{|z|}{a}\leq 2a$ or, alternatively written, $z\in [-2a^2, 2a^2]$.

\be
g(z) &= \frac{1}{4a^2}\int_{\frac{|z|}{a}}^{2a}
 (\frac{1}{x} - \frac{1}{2a} 
) - \frac{1}{4a^2}\int_{-2a}^{\frac{-|z|}{a}}
 (\frac{1}{x} + \frac{1}{2a} 
)\\
&= \frac{1}{4a^2}(\ln|x|-\frac{x}{2a})|_{x=\frac{|z|}{a}}^{x=2a} \\&- \frac{1}{4a^2}(\ln|x|+\frac{x}{2a})|_{x=-2a}^{x=\frac{-|z|}{a}}\\
&= \frac{1}{4a^2}(\ln(2a) - \ln(\frac{|z|}{a})-1+\frac{|z|}{2a^2}) \\&- \frac{1}{4a^2}(\ln(\frac{|z|}{a}) - \ln(2a)-\frac{|z|}{2a^2}+1)\\
&= \frac{1}{2a^2}(\ln\left(\frac{2a^2}{|z|}\right) + \frac{|z|}{2a^2} - 1).
\ee

So, we have that our final PDF is 
\be
g(z)= \begin{cases}
 \frac{1}{2a^2}(\ln(\frac{2a^2}{|z|}) + \frac{|z|}{2a^2} - 1) & \mbox{if } z\in I\\
 0 & \mbox{else},
\end{cases}
\ee where $I$ is the interval $[-2a^2, 2a^2].$

With this PDF we use the LOTUS to compute the expected value of a randomly sampled logistic function as \be 
E[\lambda] = \int_{-\infty}^{\infty}g(x)\left(\frac{1}{1+e^{-x}}\right)dx.
\ee

We compute the variance using the formula: $Var[X] = E[X^2] - E[X]^2$, which involves an additional application of the LOTUS.

To the authors' knowledge, none of these integrals has a closed form solution in terms of the sampling interval parameter, $a$. Hence, we use numeric simulation to understand this relationship (see Fig. \ref{fig:expectedVals}).

\end{document}